\newcommand{\defeq}{\vcentcolon=}
\newtheorem{defi}{Definition}
\newtheorem{thm}{Theorem}
\newtheorem{theo}{Theorem}[section]
\newtheorem{cor}{Corollary}
\begin{document}

\title{How not to R\'enyi-generalize the Quantum Conditional Mutual Information}

\author{Paul Erker}
\email{paul.erker@gmail.com}

\affiliation{Institute of Theoretical Physics, ETH Z\"urich, Wolfgang-Pauli-Str.~27, 8093 Z\"urich, Switzerland.}
\affiliation{F\'{\i}sica Te\`{o}rica: Informaci\'{o} i Fen\`{o}mens  Qu\`{a}ntics, Universitat Aut\`{o}noma de Barcelona, ES-08193 Bellaterra (Barcelona), Spain}
\affiliation{Faculty of Informatics, Universit\`{a} della Svizzera italiana, Via G. Buffi 13, 6900 Lugano, Switzerland}

\date{\today}

\begin{abstract} 
We study the relation between the quantum conditional mutual information and the quantum $\alpha$-R\'enyi divergences. Considering the totally antisymmetric state we show that it is not possible to attain a proper generalization of the quantum conditional mutual information by optimizing the distance in terms of quantum $\alpha$-R\'enyi divergences over the set of all Markov states. The failure of the approach considered arises from the observation that a small quantum conditional mutual information does not imply that the state is close to a quantum Markov state.
\end{abstract}
\keywords{Quantum Information theory, Quantum Conditional Mutual Information, R\'enyi divergences, non-i.i.d.}

\maketitle

\section{Introduction}

Mutual information measures are widely used to characterize the correlations in quantum many-body systems as well as in classical systems. Posing the question on how much can be deduced from a system  about a second system conditioned on knowing the state of a third system has led to the definition of the conditional mutual information, i.e.
\begin{equation}
I(A:B|C)_{\rho}  \equiv H (AC)_{\rho} + H(BC)_{\rho} - H(C)_{\rho} - H(ABC)_{\rho}.
\end{equation}
 In the setting where one is provided with i.i.d.~(identically and independently distributed) resources this is a well studied quantity and it has been shown that for the case of quantum state redistribution it characterizes the optimal amount of qubits needed in order to accomplish this task \cite{DY07}. Furthermore it is the basis for the squashed entanglement, an important quantity in entanglement theory \cite{CW03}. 
The $\alpha$-R\'enyi divergences on the other hand have proven themselves useful in the task of generalizing quantities of information to a setting beyond i.i.d.~ \cite{Dat09,Ren05,Tom12}. Nevertheless, the question of how to R\'enyi-generalize the quantum conditional mutual information to this setting remains an open problem. \\

In the following we will elucidate on how not to do it. In \cite{LMW12} the authors show that an approach of R\'enyi-generalizing the quantum conditional mutual information by simply taking a linear combination of R\'enyi  entropies in the form of
\begin{equation}
I_{\alpha}'(A:B|C)_{\rho}  \equiv H_{\alpha}(AC)_{\rho} + H_{\alpha}(BC)_{\rho} - H_{\alpha}(C)_{\rho} - H_{\alpha}(ABC)_{\rho},
\end{equation}
leads to a quantity that in general is not non-negative and also does not come equipped with a data-processing inequality. Therefore this quantity is not suitable for a proper generalization. 
In contrast, an approach where one optimizes the quantum $\alpha$-R\'enyi divergences over the set of all Markov states, i.e. states for which the quantum conditional mutual information is zero, may seem reasonable. As for the case of $D_{min}$($\tilde{D}_{\frac{1}{2}}$ in \cite{WWY14,Mul13}) it leads to a non-negative quantity that fulfills a data-processing inequality and a duality relation, as one would expect from a generalization of the quantum conditional mutual information.   However it will be shown in the following that this approach fails. This failure arises from the fact that in the case of the totally antisymmetric state this approach leads to a constant lower bound for the whole range of the parameter $\alpha$ while the quantum conditional mutual information goes to zero for large dimensions (with $\mathcal{O}(\frac{1}{d})$). Thus this approach cannot produce a quantity that in general lower bounds the quantum conditional mutual information, as it would be expected from a proper generalization \cite{Tom12}. \\

The article is organized as follows. In section II the basic physical and mathematical notions will be introduced. Section III contains the main results. These will be discussed in section IV, where we will also state open questions regarding the problem of R\'enyi-generalizing the quantum conditional mutual information.

\section{Definitions}

We make use of quantum systems A,B with corresponding finite-dimensional Hilbert spaces $\mathcal{H}_{A},\mathcal{H}_{B} $ where $|A|,|B| $ denote their dimensionality. According to the postulates of quantum mechanics, the Hilbert space of a composite system is given by the tensor product $\mathcal{H}_{AB}\defeq \mathcal{H}_{A}\otimes\mathcal{H}_{B}$. The state of a quantum system is represented by a density operator $\rho$ acting on some Hilbert space $\mathcal{H}$. In the following $\mathcal{D} (\mathcal{H})\defeq \{ \rho : \rho \geq 0, 0 < \textnormal{Tr}(\rho) \leq 1 \}$  denotes the set of quantum states acting on $\mathcal{H}$ and  $\mathcal{D}_{=} (\mathcal{H})\defeq \{  \rho : \rho \geq 0, \textnormal{Tr}(\rho) = 1 \}$ the set of all normalized states. To avoid ambiguities we let $\rho_A \equiv \textnormal{Tr}_B \, \rho_{AB}$ be the reduced state on system A.\\
To quantify the distance between any two quantum states $\rho , \sigma \in \mathcal{D} (\mathcal{H})$ we use the generalized fidelity \cite{Tom10}, defined as 
\begin{equation}
F (\rho , \sigma)\defeq  \left|\left| \sqrt{\rho} \sqrt{\sigma} \right|\right|_{1} - \sqrt{(1-\textnormal{Tr}\rho )(1-\textnormal{Tr}\sigma)}.
\end{equation}
In the case that either $\rho$ or $\sigma $ is an element of the set of all normalized states  $\mathcal{D}_{=}$ the above expression reduces to the standard fidelity
\begin{equation}
F (\rho , \sigma)\defeq  \left|\left| \sqrt{\rho} \sqrt{\sigma} \right|\right|_{1}.
\end{equation}
\\
Many of the well-known information measures in the asymptotic framework of quantum information theory can be written in terms of the quantum relative entropy or Umegaki relative entropy \cite{umegaki};

\begin{defi}
Let $\rho,\sigma \in \mathcal{D} (\mathcal{H}) $ with $supp\,\rho \subseteq supp\,\sigma$, then the quantum relative entropy is defined as
\begin{align}
D_1 (\rho \|  \sigma) \defeq \frac{1}{\textnormal{Tr} \rho} \textnormal{Tr} \left[ \rho \left( \textnormal{log} \rho - \textnormal{log} \sigma \right) \right] .
\end{align}
\end{defi}
Recalling the definitions for the von Neumann entropy $H(A)_{\rho}$, the conditional von Neumann entropy $H(A|B)_{\rho}$ and the von Neumann Mutual Information $I(A:B)_{\rho}$, it is straightforward to check that
\begin{align}
H(A)_{\rho} &\equiv - \textnormal{Tr} \, \rho \,\textnormal{log} \,\rho \\
&= - D_1 (\rho_A \|  \mathbbm{1}_A ),
\end{align}
\begin{align}
H(A|B)_{\rho} &\equiv H(AB)_{\rho} - H(B)_{\rho} \\
&= - \min_{\sigma_B} D_1 (\rho_{AB} \|  \mathbbm{1}_A \otimes \sigma_B )
\end{align}
and
\begin{align}
I(A:B)_{\rho} &\equiv H(A)_{\rho} + H(B)_{\rho} - H(AB)_{\rho} \\
&= \min_{\sigma_B} D_1 (\rho_{AB} \|  \rho_A \otimes \sigma_B ).
\end{align}

Generalizations of these quantities in terms of quantum R\'enyi divergences have found applications in various operational tasks \cite{Tom12,KRS09,MH11};
\begin{defi}
Let $\alpha \in (0,1) \cup (1,\infty)$ and let $\rho, \sigma \in \mathcal{D} (\mathcal{H})$ with $supp\,\rho \subseteq supp\,\sigma$. Then the quantum Rényi divergence of order $\alpha$ is defined as
\begin{equation}
D_{\alpha} (\rho \|  \sigma) \defeq  \frac{1}{\alpha - 1}\textnormal{log} \frac{ \textnormal{Tr} \rho^{\alpha} \sigma^{1- \alpha}}{\textnormal{Tr} \rho}.
\end{equation}
\end{defi}

Note that $D_{\alpha} $ is a monotonically increasing function in $\alpha$ and that in the limit $\alpha \rightarrow 1$ we recover the quantum relative entropy.\\
Furthermore we define the 0-relative entropy ($D_{\min}$ in Ref. \cite{Dat09}), which naturally appears in binary hypothesis testing when the probability for type I error is set to zero.

\begin{defi}
Let $\rho,\sigma \in \mathcal{D} (\mathcal{H})$, then the 0-relative entropy is
\begin{equation}
D_0 (\rho \|  \sigma) \defeq \lim_{\alpha \rightarrow 0^{+}} D_{\alpha} (\rho \|  \sigma) = - \textnormal{log} \,\textnormal{Tr}\, P_{\rho} \, \sigma,
\end{equation}
where $P_{\rho}$ denotes the projector onto the support of $\rho$.
\end{defi}
As above, the quantum conditional mutual information for a tripartite quantum state ABC is given by
\begin{equation}
I(A:B|C)_{\rho}  \equiv H(AC)_{\rho} + H(BC)_{\rho} - H(C)_{\rho} - H(ABC)_{\rho} .
\end{equation}
Moreover we define the set of all Markov states as
\begin{equation}
\mathcal{M}_{AB|C} \defeq \{ \sigma \in \mathcal{D} (\mathcal{H}_{ABC}): I (A:B |C)_{\sigma}=0\}.
\end{equation}
The condition that $\sigma$ fulfills strong subadditivity with equality (i.e. $ I (A:B |C)_{\sigma}=0$) is equivalent to the statement that there exists a decomposition of system C as
\begin{equation}
\mathcal{H}_C = \bigoplus_i  \mathcal{H}_{C^{L}_i} \otimes \mathcal{H}_{C^{R}_i} 
\end{equation}
into a direct sum of tensor products such that
\begin{equation}
\label{markov}
 \sigma_{ABC} = \bigoplus_i p_i \sigma_{AC^{L}_i} \otimes \sigma_{C^{R}_i B}\,,
\end{equation}
 where $\{p_i\}$ is some probability distribution (see Ref. \cite{HJPW04}). Furthermore, we define the totally antisymmetric state as 
\begin{equation}
\label{anti}
 \gamma_d = \frac{2}{d(d-1)} P_{\,\yng(1,1)}
\end{equation}
where $P_{\,\yng(1,1)}$ denotes the projector onto the antisymmetric subspace $ \Lambda^2 (\mathbb{C}^d)$ in $(\mathbb{C}^d)^{\otimes 2}$  .

\section{Main Result}
In Ref. \cite{ILW06} the authors show that the quantity 
\begin{equation}
\Delta (\rho_{ABC}) \defeq \inf_{\sigma_{ABC}\in \mathcal{M}_{AB|C}} D_1 (\rho_{ABC} \|  \sigma_{ABC})
\end{equation}
poses an upper bound to the quantum conditional mutual information, i.e.
\begin{equation}
 I (A:B |C)_{\rho}\leq \Delta (\rho_{ABC})
\end{equation}
and there exist certain states for which the inequality becomes strict. Along these lines we will show that it is not possible to generalize the quantum conditional mutual information to the non-i.i.d.~setting by optimizing quantum R\'enyi divergences over the set of all Markov states. This will be done in the following by the construction of a specific example, where this approach leads to an upper bound of the quantum conditional mutual information for any value of $\alpha$. Note that this approach may seem reasonable, as it can be shown that
\begin{equation}
\Delta_{\min} (\rho_{ABC}) \defeq \inf_{\sigma_{ABC} \in \mathcal{M}_{AB|C}} - \textnormal{log} F^2( \rho_{ABC}, \sigma_{ABC})
\end{equation}
shares three main properties with the quantum conditional mutual information. Namely, these are non-negativity, non-increasing under data processing and duality for quadripartite pure states \cite{Erk14} (we review the argument in Appendix A). \\
Generalizing the above approach with the quantum R\'enyi divergences, we define
\begin{equation}
\Delta_{\alpha} (\rho_{ABC}) \defeq \inf_{\sigma_{ABC} \in \mathcal{M}_{AB|C}} D_{\alpha} (\rho_{ABC} \|  \sigma_{ABC}).
\end{equation}

 for $\alpha \geq 0 $. Our result now reads as follows

\begin{thm}
Let $P_k$ be the projector onto the antisymmetric subspace $ \Lambda^k (\mathbb{C}^d)$ in $(\mathbb{C}^d)^{\otimes k}$ with $\mathcal{H}_A \cong \mathcal{H}_B \cong \mathbb{C}^d$ being the first two tensor factors and let $\rho_{ABC}\defeq \frac{P_k}{d_k}$, where $d_k \defeq \dim  \Lambda^k (\mathbb{C}^d) = \binom{d}{k}$.
Then
\begin{equation}
 \Delta_{\alpha} (\rho_{ABC}) \geq \log \sqrt{\frac{4}{3}}.
\end{equation}
\end{thm}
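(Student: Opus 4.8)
The plan is to exploit the monotonicity of $D_\alpha$ in $\alpha$ in order to reduce the entire family of bounds to a single tractable endpoint, namely $\alpha = 0$. Since the excerpt records that $D_\alpha(\rho\|\sigma)$ is monotonically increasing in $\alpha$, for each fixed Markov state $\sigma$ one has $D_\alpha(\rho_{ABC}\|\sigma)\ge D_0(\rho_{ABC}\|\sigma)$, and taking the infimum over $\sigma\in\mathcal M_{AB|C}$ on both sides yields $\Delta_\alpha(\rho_{ABC})\ge\Delta_0(\rho_{ABC})$ for all $\alpha\ge0$. It therefore suffices to lower bound $\Delta_0$. Using the closed form $D_0(\rho\|\sigma) = -\log\mathrm{Tr}[P_\rho\,\sigma]$ together with the fact that the support projector of $\rho_{ABC} = P_k/d_k$ is exactly $P_k$, I would rewrite
\begin{equation}
\Delta_0(\rho_{ABC}) = -\log\sup_{\sigma\in\mathcal M_{AB|C}}\mathrm{Tr}[P_k\,\sigma],
\end{equation}
so that the whole problem collapses to producing a dimension-independent \emph{upper} bound on the overlap $\mathrm{Tr}[P_k\,\sigma]$ between the totally antisymmetric projector and an arbitrary Markov state.

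The key structural observation is that only the reduced state $\sigma_{AB}$ matters, and that this reduced state is forced to be separable. First, because $\Lambda^k(\mathbb{C}^d)$ consists of tensors antisymmetric under \emph{every} transposition, it lies inside the subspace antisymmetric under exchange of the first two factors $A,B$; hence $P_k\le P_\wedge^{AB}\otimes\mathbbm{1}_C$, where $P_\wedge^{AB} = \tfrac12(\mathbbm{1} - \mathrm{SWAP}_{AB})$ is the antisymmetric projector on $\mathcal H_A\otimes\mathcal H_B$. Tracing this inequality against $\sigma$ and taking the partial trace over $C$ gives $\mathrm{Tr}[P_k\,\sigma]\le\mathrm{Tr}[P_\wedge^{AB}\,\sigma_{AB}]$. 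Second, the Markov form in Eq.~\eqref{markov} reads $\sigma_{ABC} = \bigoplus_i p_i\,\sigma_{AC^{L}_i}\otimes\sigma_{C^{R}_i B}$, and tracing out $C$ collapses each block to a product, so that $\sigma_{AB} = \sum_i p_i\,\sigma_A^{(i)}\otimes\sigma_B^{(i)}$ is separable across the $A\!:\!B$ cut.

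What then remains is to bound the overlap of a separable two-party state with the antisymmetric projector. For each product term one computes $\mathrm{Tr}[P_\wedge^{AB}(\sigma_A^{(i)}\otimes\sigma_B^{(i)})] = \tfrac12(1 - \mathrm{Tr}[\sigma_A^{(i)}\sigma_B^{(i)}])\le\tfrac12$, and by linearity of the trace this estimate $\mathrm{Tr}[P_\wedge^{AB}\,\sigma_{AB}]\le\tfrac12$ survives the convex combination. Chaining the inequalities yields $\mathrm{Tr}[P_k\,\sigma]\le\tfrac12$ for every $\sigma\in\mathcal M_{AB|C}$, so that $\Delta_0\ge\log 2\ge\log\sqrt{4/3}$, and by the reduction of the first paragraph $\Delta_\alpha\ge\log\sqrt{4/3}$ for all $\alpha\ge0$, uniformly in $k$ and $d$.

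The main obstacle I anticipate is conceptual rather than computational: recognizing that the ``hardest'' value of $\alpha$ is the endpoint $\alpha=0$, so that the R\'enyi divergence degenerates into a support overlap that no longer sees the fine structure of $\rho$ beyond its range. Once this reduction is secured, the two structural facts—the projector inclusion coming from full antisymmetry and the separability of $\sigma_{AB}$ forced by the Markov condition—do the real work, and the overlap bound $\tfrac12$ for separable states is standard. A minor technical point to verify is that allowing subnormalized $\sigma\in\mathcal D(\mathcal H_{ABC})$ only decreases the overlap, so the supremum is attained on normalized states and the reduction to $D_0$ carries no support caveats. I note that this route in fact delivers the stronger constant $\log 2$; the stated bound $\log\sqrt{4/3}$ follows a fortiori and already suffices to contradict $I(A:B|C)_\rho = \mathcal{O}(1/d)\to0$.
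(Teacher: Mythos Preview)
Your proof is correct and follows the same high-level strategy as the paper: reduce to $\alpha=0$ via monotonicity in $\alpha$, exploit that the $AB$-marginal of any Markov state is separable, and then bound the overlap of a separable state with the antisymmetric projector on $AB$. The execution, however, differs in two respects worth noting.

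First, the paper reaches the $AB$-overlap by applying CPTP monotonicity of $D_\alpha$ under the partial trace over $C$ (together with the fact that $\mathrm{Tr}_C\,\rho_{ABC}=\gamma_d$), whereas you bypass both of these by the operator inequality $P_k\le P_\wedge^{AB}\otimes\mathbbm{1}_C$, which is more elementary and requires neither data processing nor the identification of the reduced state. Second, for the final overlap bound the paper invokes the Christandl--Schuch--Winter result to obtain $\log\sqrt{4/3}$, while your direct computation $\mathrm{Tr}[P_\wedge(\sigma_A\otimes\sigma_B)]=\tfrac12(1-\mathrm{Tr}[\sigma_A\sigma_B])\le\tfrac12$ is self-contained and in fact yields the sharper constant $\log 2$. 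So your route is shorter, avoids the external citation, and strengthens the conclusion; the paper's route, on the other hand, works at the level of $D_\alpha$ before specializing to $\alpha=0$ and makes explicit that the reduced state is $\gamma_d$, which is conceptually relevant for the later regularized statement.
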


\begin{proof}
We start out from the definition of the left-hand side making use of the monotonicity under CPTP maps of the quantum R\'enyi divergences by tracing out subsystem C.

\begin{align}
\Delta_{0} (\rho_{ABC}) &=  \inf_{\sigma_{ABC} \in \mathcal{M}_{AB|C}} \lim_{\alpha \rightarrow 0^{+}} D_{\alpha} (\rho_{ABC}\|  \sigma_{ABC}) \\
&\geq   \inf_{\sigma_{AB} \in \mathcal{S}_{A|B}}\lim_{\alpha \rightarrow 0^{+}} D_{\alpha} (\gamma_d \|  \sigma_{AB}) \\
&=  \inf_{\sigma_{AB} \in \mathcal{S}_{A|B}} -  \textnormal{log }  \textnormal{Tr }  P_{\gamma_d} \sigma_{AB}
\end{align}
Due to the special form of the Markov states (see eq. \ref{markov}) it holds that the optimization in the last two lines runs over the set of all separable states  $\mathcal{S}_{A|B}=\{ \sigma_{AB} \in \mathcal{D} (\mathcal{H}_{AB}): \sigma_{AB} = \sum_j p_j \sigma^{(j)}_A \otimes \sigma^{(j)}_B \}$ where $\{p_j\}$ is some probability distribution. Moreover it is clear that $\rho_{AB} =\textnormal{Tr}_C \rho_{ABC}$ equals the totally antisymmetric state $\gamma_d$ in $(\mathbb{C}^d)^{\otimes 2}$. \\
Since the totally antisymmetric state $\gamma_d$ is invariant under the action $g \otimes g$, where $g$ is unitary, we can restrict the optimization problem to states obeying the same symmetry. It has been shown that the expression in the last line has a constant lower bound equal to $\log \sqrt{\frac{4}{3}}$ (see proof of Corollary 3 in \cite{CSW}). Observing that the quantum R\'enyi divergences monotonically increase in $\alpha$ concludes the proof.\\
\end{proof}

We can now directly relate the quantum conditional mutual information to $\Delta_{0} $ for the state $ \rho_{ABC}$ defined above.

\begin{cor}
Let $\rho_{ABC} = \frac{P_k}{d_k}$  as above then

\begin{equation}
 \Delta_{0} (\rho_{ABC}) >  I (A:B |C)_{\rho_{ABC}},
\end{equation}
for $k= \lceil \frac{d + 1}{2} \rceil $ and $d \geq 27$.
\end{cor}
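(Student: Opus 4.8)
The plan is to pair the lower bound on $\Delta_0$ furnished by the Theorem with an exact evaluation of the conditional mutual information, and then to check that the latter drops below $\log\sqrt{\tfrac{4}{3}}$ precisely in the stated regime. Since the Theorem already yields $\Delta_0(\rho_{ABC}) \geq \log\sqrt{\tfrac{4}{3}}$ (the $\alpha = 0$ case), it suffices to prove $I(A:B|C)_{\rho_{ABC}} < \log\sqrt{\tfrac{4}{3}}$ for $k = \lceil\frac{d+1}{2}\rceil$ and $d \geq 27$; the strict inequality $\Delta_0 > I$ then follows immediately.

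First I would determine the marginals of $\rho_{ABC} = P_k/d_k$. The decisive structural fact is that tracing out any single tensor factor of the maximally mixed state on $\Lambda^k(\mathbb{C}^d)$ returns the maximally mixed state on $\Lambda^{k-1}(\mathbb{C}^d)$. This holds because (i) antisymmetry of the remaining factors is inherited, so the marginal is supported on $\Lambda^{k-1}(\mathbb{C}^d)$, and (ii) the marginal inherits invariance under $U^{\otimes(k-1)}$, and since $\Lambda^{k-1}(\mathbb{C}^d)$ carries an irreducible representation of the unitary group (Schur--Weyl duality), Schur's lemma forces it to be a scalar multiple of the projector $P_{k-1}$, hence equal to $P_{k-1}/d_{k-1}$. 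Iterating, every relevant marginal is maximally mixed on an antisymmetric subspace, so the von Neumann entropies are $H(ABC) = \log\binom{d}{k}$, $H(AC) = H(BC) = \log\binom{d}{k-1}$, and $H(C) = \log\binom{d}{k-2}$.

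Substituting these into the definition produces the closed form
\[
I(A:B|C)_{\rho_{ABC}} = \log\frac{\binom{d}{k-1}^2}{\binom{d}{k-2}\binom{d}{k}} = \log\frac{k\,(d-k+2)}{(k-1)(d-k+1)},
\]
where the second equality uses the elementary ratios $\binom{d}{k-1}/\binom{d}{k} = k/(d-k+1)$ and $\binom{d}{k-1}/\binom{d}{k-2} = (d-k+2)/(k-1)$. I would then insert $k = \lceil\frac{d+1}{2}\rceil$: for odd $d$ this collapses to $\log\frac{d+3}{d-1}$ and for even $d$ to $2\log\frac{d+2}{d}$, both of which are $\mathcal{O}(\tfrac{1}{d})$ and decrease monotonically to zero. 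Finally I would impose $I(A:B|C)_{\rho_{ABC}} < \log\sqrt{\tfrac{4}{3}}$, i.e.\ $\frac{d+3}{d-1} < \tfrac{2}{\sqrt{3}}$ in the odd case (and the analogous inequality in the even case), and solve the resulting elementary inequality, which is satisfied for $d \geq 27$.

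The only genuinely nontrivial ingredient is the marginal computation; once the Schur's-lemma argument pins down that each partial trace remains maximally mixed on the appropriate antisymmetric subspace, the rest reduces to routine manipulation of binomial coefficients and a numerical threshold check. I would take care to treat the parity of $d$ separately when inserting $k = \lceil\frac{d+1}{2}\rceil$, since the even and odd cases yield slightly different though both $\mathcal{O}(\tfrac{1}{d})$ expressions, and to verify that $d = 27$ is indeed the first integer clearing the bound (so that the threshold is sharp) while $d = 26$ still fails.
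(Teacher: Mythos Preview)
Your proposal is correct and follows essentially the same approach as the paper: invoke the Theorem's constant lower bound $\Delta_0 \geq \log\sqrt{4/3}$, compute $I(A:B|C)$ explicitly for the antisymmetric state, and compare numerically to find the threshold $d\geq 27$. The only difference is that you supply a self-contained Schur's-lemma derivation of the marginals (and hence of the closed form for $I(A:B|C)$), whereas the paper simply quotes the resulting even/odd formulas from \cite{CSW}.
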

\begin{proof}
An explicit evaluation of the right-hand side (see Ref. \cite{CSW}) gives us
\begin{align}
I (A:B |C)_{\rho_{ABC}}= \begin{cases} 2 \log \frac{d+2}{d} \ &\mbox{if } d \textnormal{ is even }\\
\log \frac{d+3}{d-1} & \mbox{if } d\textnormal{ is odd}.  \end{cases}
\end{align}
Finally a simple numerical calculation shows that 
\begin{align}
\log \sqrt{\frac{4}{3}} > \begin{cases} 2 \log \frac{d+2}{d} \ &\mbox{if } d \textnormal{ is even }\\
\log \frac{d+3}{d-1} & \mbox{if } d\textnormal{ is odd}  \end{cases}
\end{align}
is the case for $d\geq 27$.
\end{proof}
This implies that the relative gap between $\Delta_0$ and the quantum conditional mutual information can be made arbitrarily big by scaling up the dimension as
\begin{equation}
 \Delta_{0} (\rho_{ABC}) \geq const.
\end{equation}
while
\begin{equation}
I (A:B |C)_{\rho_{ABC}} \leq \frac{4}{d-1}=\mathcal{O} \left( \frac{1}{d} \right)
\end{equation}
when $k= \lceil \frac{d + 1}{2} \rceil $. 

\begin{cor}
Let $\rho_{ABC} = \frac{P_k}{d_k}$  as above. Then it holds

\begin{equation}
 \Delta_{\min} (\rho_{ABC}) >  I (A:B |C)_{\rho_{ABC}},
\end{equation}
for $k= \lceil \frac{d + 1}{2} \rceil $ and $d \geq 27$.
\end{cor}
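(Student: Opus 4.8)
The plan is to run the proof of the Theorem again, essentially verbatim, with the Petz divergence $D_\alpha$ replaced by the fidelity-based integrand $-\log F^2$. Recall that for the normalized states appearing here $-\log F^2(\rho,\sigma)=\tilde D_{1/2}(\rho\|\sigma)$, the sandwiched R\'enyi divergence of order $1/2$. First I would invoke monotonicity under CPTP maps: the generalized fidelity is non-decreasing under partial trace, so $-\log F^2$ is non-increasing, and tracing out $C$ gives $-\log F^2(\rho_{ABC},\sigma_{ABC})\geq -\log F^2(\rho_{AB},\sigma_{AB})$. Exactly as in the Theorem, $\rho_{AB}=\gamma_d$ and the image of $\mathcal{M}_{AB|C}$ under $\textnormal{Tr}_C$ is contained in the separable states $\mathcal{S}_{A|B}$ (by the direct-sum form \eqref{markov}), so that
\[
\Delta_{\min}(\rho_{ABC})\;\geq\;\inf_{\sigma_{AB}\in\mathcal{S}_{A|B}} -\log F^2(\gamma_d,\sigma_{AB}).
\]

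The only genuinely new ingredient relative to the Theorem is a pointwise comparison of $-\log F^2(\gamma_d,\sigma)$ with the $D_0$-integrand $-\log \textnormal{Tr}(P_{\gamma_d}\sigma)$. Here I would exploit that $\gamma_d$ is proportional to a projector, $\gamma_d=P_{\gamma_d}/r$ with $r=d(d-1)/2$, so that $\sqrt{\gamma_d}=P_{\gamma_d}/\sqrt r$ and $F^2(\gamma_d,\sigma)=\tfrac1r\|P_{\gamma_d}\sqrt\sigma\|_1^2$. Applying the Cauchy--Schwarz bound $\textnormal{Tr}\sqrt M\leq\sqrt{\textnormal{rank}(M)\,\textnormal{Tr} M}$ to $M=\sqrt\sigma\,P_{\gamma_d}\sqrt\sigma$, whose rank is at most $r$ and whose trace is $\textnormal{Tr}(P_{\gamma_d}\sigma)$, gives $\|P_{\gamma_d}\sqrt\sigma\|_1\leq\sqrt{r\,\textnormal{Tr}(P_{\gamma_d}\sigma)}$ and hence $F^2(\gamma_d,\sigma)\leq\textnormal{Tr}(P_{\gamma_d}\sigma)$. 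Equivalently, this is nothing but the monotonicity of the sandwiched R\'enyi divergence in its order, $\tilde D_{1/2}(\gamma_d\|\sigma)\geq\tilde D_0(\gamma_d\|\sigma)=D_0(\gamma_d\|\sigma)$, specialised to the present pair; at the level of the optimised quantities it is precisely the inequality $\Delta_{\min}\geq\Delta_0$. I expect this comparison to be the main (and essentially the only) obstacle, since everything else is a transcription of the Theorem's argument.

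Combining the display above with this comparison and the bound established in the proof of the Theorem (via Corollary~3 of \cite{CSW}), I would conclude
\[
\Delta_{\min}(\rho_{ABC})\;\geq\;\inf_{\sigma_{AB}\in\mathcal{S}_{A|B}} -\log \textnormal{Tr}(P_{\gamma_d}\sigma_{AB})\;\geq\;\log\sqrt{\tfrac{4}{3}}.
\]
Finally I would invoke Corollary~1, which shows $\log\sqrt{4/3}>I(A:B|C)_{\rho_{ABC}}$ for $k=\lceil\frac{d+1}{2}\rceil$ and $d\geq 27$, yielding $\Delta_{\min}(\rho_{ABC})>I(A:B|C)_{\rho_{ABC}}$ in that regime and completing the proof. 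In short, the whole statement reduces to $\Delta_{\min}\geq\Delta_0$, after which the constant lower bound of the Theorem and the numerical comparison of Corollary~1 do the rest.
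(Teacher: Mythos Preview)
Your proposal is correct and coincides with the paper's approach: both reduce the claim to the pointwise inequality $D_{\min}(\rho\|\sigma)=-\log F^2(\rho,\sigma)\geq D_0(\rho\|\sigma)$, hence $\Delta_{\min}\geq\Delta_0$, and then invoke Corollary~1. The only difference is cosmetic: the paper dispatches this in one line by citing the known inequality (Lemma~5.2 in \cite{milan}) at the level of $\rho_{ABC}$ and Markov states, whereas you first trace out $C$ and then re-derive the special case of that inequality via Cauchy--Schwarz for the projector state $\gamma_d$; your intermediate trace-out step is therefore unnecessary but harmless.
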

\begin{proof}
The statement follows directly from the observation that $D_{\min} (\rho \|  \sigma)  \geq D_{0} (\rho \|  \sigma) $ (Lemma 5.2 in \cite{milan}).\\
\end{proof}

As in the case of  R\'enyi generalizations of some information measures, the property of convergence to their von Neumann equivalents in the i.i.d.~-limit emerges from an additional limit taken in the smoothing parameter, we show here that our results above also hold in a regularized and smoothed version.\\
First, we need to define a metric on $\mathcal{D} (\mathcal{H})$, which we choose to be the  purified distance (introduced in Ref. \cite{Tom10}), i.e.
\begin{equation}
P (\rho , \sigma)\defeq \sqrt{1-F^2(\rho , \sigma)} .
\end{equation}
Furthermore, two states $\rho$ and $ \sigma$ will be called $\epsilon$-close if and only if $P (\rho , \sigma)\leq \epsilon$. Also we define the ball of $\epsilon$-close states around $\rho \in \mathcal{D} (\mathcal{H}) $ as
\begin{equation}
\mathcal{B}^{\epsilon} (\rho) \defeq \{ \rho' \in \mathcal{D}:  P (\rho , \rho')\leq \epsilon \} .
\end{equation} 
Now we can define the smoothed version of the $\Delta_{0}$-quantity

\begin{equation}
\Delta^{\cdot,\epsilon}_{0} (\rho) \defeq \max_{\tilde{\rho} \in \mathcal{B}^{\epsilon} (\rho) } \Delta_{0} (\tilde{\rho})
\end{equation}
One can immediately observe that 
\begin{equation}
\label{smooth}
\Delta^{\cdot,\epsilon}_{0} (\rho) \geq \Delta_{0} (\rho) \quad \forall \, 0 \leq \epsilon < 1
\end{equation}
holds. As for the case of regularization, we need to generalize our statements to the case where the number of available copies of the given state goes to infinity. Therefore we define the regularized version of $\Delta_{0}$, i.e.
\begin{equation}
\Delta^{\infty,\cdot}_{0} (\rho) \defeq \lim_{n \to \infty} \, \frac{1}{n} \, \Delta_{0} (\rho^{\otimes n}) = \lim_{n \to \infty} \, \frac{1}{n} \, \inf_{\sigma \in \mathcal{M}_{A^{n}B^{n}|C^{n}}} D_{0} (\rho^{\otimes n} \|  \sigma).
\end{equation}
Thus we need a statement more general than Theorem 1, which reads as follows;

\begin{thm}
\label{reg}
Let $\rho_{ABC} = \frac{P_k}{d_k}$  as above. Then
\begin{equation}
 \Delta_{0} (\rho_{ABC}^{\otimes n}) \geq n \log \sqrt{\frac{4}{3}}.
\end{equation}

\end{thm}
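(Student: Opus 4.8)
The plan is to repeat the reduction used in the proof of Theorem 1, but now for the $n$-fold tensor product, and then to reduce the whole statement to a multiplicativity property of the maximal overlap of the antisymmetric projector with separable states. First I would apply data processing for $D_0$ under the partial trace $\textnormal{Tr}_{C^n}$. Exactly as in the single-copy case, any Markov state $\sigma\in\mathcal{M}_{A^nB^n|C^n}$ has, by eq.~(\ref{markov}), the block form $\bigoplus_i p_i\,\sigma_{A^nC_i^L}\otimes\sigma_{C_i^RB^n}$, so that $\textnormal{Tr}_{C^n}\sigma=\sum_i p_i\,\sigma^{(i)}_{A^n}\otimes\sigma^{(i)}_{B^n}\in\mathcal{S}_{A^n|B^n}$. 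Since moreover $\textnormal{Tr}_{C^n}\rho_{ABC}^{\otimes n}=(\textnormal{Tr}_C\rho_{ABC})^{\otimes n}=\gamma_d^{\otimes n}$ and $P_{\gamma_d^{\otimes n}}=P_{\gamma_d}^{\otimes n}$, monotonicity yields
\[\Delta_0(\rho_{ABC}^{\otimes n})\ \geq\ \inf_{\sigma\in\mathcal{S}_{A^n|B^n}}D_0(\gamma_d^{\otimes n}\|\sigma)\ =\ -\log\sup_{\sigma\in\mathcal{S}_{A^n|B^n}}\textnormal{Tr}\!\left(P_{\gamma_d}^{\otimes n}\sigma\right).\]

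It then suffices to control the maximal separable overlap. Writing $c\defeq\sup_{\sigma\in\mathcal{S}_{A|B}}\textnormal{Tr}(P_{\gamma_d}\sigma)$, the proof of Theorem 1 (through \cite{CSW}) gives $-\log c\geq\log\sqrt{4/3}$, so my goal reduces to the sub-multiplicative estimate $\sup_{\sigma\in\mathcal{S}_{A^n|B^n}}\textnormal{Tr}(P_{\gamma_d}^{\otimes n}\sigma)\leq c^{\,n}$, from which $\Delta_0(\rho_{ABC}^{\otimes n})\geq -\log c^{\,n}=n(-\log c)\geq n\log\sqrt{4/3}$ follows at once.

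To establish this I would exploit the $U(d)^{\times n}$ symmetry. Each factor $\gamma_d$, and hence $P_{\gamma_d}^{\otimes n}$, commutes with $g_1\otimes g_1\otimes\cdots\otimes g_n\otimes g_n$ for arbitrary unitaries $g_i$ (the $i$-th acting on both $A_i$ and $B_i$), which is a local unitary across the $A^n|B^n$ cut. Averaging an optimizer over the group is therefore an LOCC twirl that preserves separability and fixes the objective, so the supremum is attained on a $U(d)^{\times n}$-invariant $\sigma$. By Schur--Weyl such a $\sigma$ lies in the commutant spanned by the swaps, $\sigma=\sum_{S\subseteq\{1,\dots,n\}}c_S\,W_S$ with $W_S\defeq\prod_{i\in S}W_i$ and $W_i$ the swap of $A_i$ and $B_i$. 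Using $P_{\gamma_d}^{\otimes n}W_S=(-1)^{|S|}P_{\gamma_d}^{\otimes n}$ the overlap becomes $\binom{d}{2}^{n}\sum_S(-1)^{|S|}c_S$. Relaxing separability to the positive-partial-transpose condition (legitimate, as this only enlarges the feasible set), the constraints $\sigma\geq0$ and $\sigma^{T_{B^n}}\geq0$ translate into the eigenvalue inequalities $\sum_S c_S\prod_{i\in S}s_i\geq0$ for all $s\in\{\pm1\}^n$ and $\sum_{S\subseteq T}c_S\,d^{|S|}\geq0$ for all $T\subseteq\{1,\dots,n\}$, together with normalization. This is a finite linear program whose objective, normalization and every constraint family factorize coordinate-wise over the $n$ copies; exhibiting product-optimal primal and dual points gives the value $c^{\,n}$ (the matching lower bound being the trivial product construction), which closes the argument.

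The hard part is precisely this multiplicativity step. A priori the supremum could be strictly super-multiplicative: entanglement distributed among the $n$ copies on the $A^n$ side, and correspondingly on $B^n$, might raise the overlap above $c^{\,n}$, which is exactly the mechanism behind the failure of additivity for generic maximal-overlap and output-norm quantities. The entire purpose of the symmetry reduction is to forbid this, by collapsing the optimization onto the abelian algebra of swaps where the product structure of the reduced linear program enforces exact multiplicativity, so that cross-copy entanglement cannot help. Verifying that this reduced program (equivalently, the PPT relaxation for the antisymmetric projector) is genuinely tensor-stable is the technical heart of the proof.
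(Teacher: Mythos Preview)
Your reduction is exactly the paper's: apply data processing for $D_0$ under $\textnormal{Tr}_{C^n}$, use the block structure of Markov states to land in $\mathcal{S}_{A^n|B^n}$, and reduce to bounding $\sup_{\sigma\in\mathcal{S}_{A^n|B^n}}\textnormal{Tr}\bigl(P_{\gamma_d}^{\otimes n}\sigma\bigr)$. The only difference is that the paper then simply invokes Lemmas~9 and~12 of \cite{CSW} for the bound $n\log\sqrt{4/3}$, whereas you sketch how one would \emph{prove} that multiplicativity---$U(d)^{\times n}$ twirl, Schur--Weyl reduction to the swap algebra, PPT relaxation, and factorization of the resulting LP---which is indeed the mechanism behind those lemmas. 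So your proposal is correct and matches the paper's route, just with the black box partially unpacked; your own caveat that the tensor-stability of the PPT program is the technical heart is exactly where the cited lemmas do the work.
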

\begin{proof}
Tracing out system C it holds that

\begin{align}
\Delta_{0} (\rho_{ABC}^{\otimes n}) &=  \inf_{\sigma_{ABC} \in \mathcal{M}_{A^{n}B^{n}|C^{n}}} \lim_{\alpha \rightarrow 0^{+}} D_{\alpha} (\rho_{ABC}^{\otimes n}\|  \sigma_{ABC}) \\
&\geq   \inf_{\sigma_{AB} \in \mathcal{S}_{A^{n}|B^{n}}}\lim_{\alpha \rightarrow 0^{+}} D_{\alpha} (\gamma_d^{\otimes n} \|  \sigma_{AB}) \\
&=  \inf_{\sigma_{AB} \in \mathcal{S}_{A^{n}|B^{n}}} -  \textnormal{log }  \textnormal{Tr }  P^{\otimes n}_{\gamma_d} \sigma_{AB}
\end{align}
The observation that the last line has a lower bound given by $n \log \sqrt{\frac{4}{3}}$ (see Lemma 9 and 12  in  Ref. \cite{CSW}) concludes the proof.
\end{proof}

Putting together the observations Theorem \ref{reg} and  eq. \ref{smooth} we can conclude that the results achieved for $\Delta_{0}$ also hold for its smoothed and regularized version, hence establishing

\begin{equation}
 \Delta^{\infty,\epsilon}_{0} (\rho_{ABC}) \defeq \lim_{n \to \infty} \, \frac{1}{n} \, \Delta^{\epsilon}_{0} (\rho^{\otimes n}) \geq const.  \quad \forall \, 0 \leq \epsilon < 1.
\end{equation}

\section{Conclusions and Open Questions}

Two important conclusions can be drawn from our results. On the one hand we have shown that the approach of optimizing quantum R\'enyi divergences over the set of all Markov states cannot lead to a quantity that poses a lower bound to the quantum conditional mutual information. For a generalization of the quantum conditional mutual information to the non-i.i.d. setting it would be desirable to have an ordering of the kind
$$I_{\max} (A:B |C)_{\rho} \geq I(A:B |C)_{\rho} \geq I_{\min} (A:B |C)_{\rho}.$$ In Ref. \cite{BKM14} the authors present a promising approach for accomplishing this task. In their work they also state a conjecture for which a proof would be a major step in the direction of finally achieving a proper generalization of the quantum conditional mutual information.\\
On the other hand, our results strengthen the observation that a small quantum conditional mutual information does not imply that the state is near to a quantum Markov state.\\
It should also be mentioned that recently the quantum R\'enyi divergences have been generalized \cite{WWY14,Mul13} and also extended to a two parameter family \cite{Aud13}. So as long as it is not clear how to generalize the quantum conditional mutual information it remains an interesting open question if it can be shown that a generalization in the above mentioned way is not possible for the whole range of the new parameters. One might also pose the question how the generalizations proposed in Ref. \cite{BKM14} relate to the observations of this paper having in mind the results given in Ref. \cite{Lin14}.

\subsection*{Acknowledgements}
We would like to thank Stefan B\"auml, Normand Beaudry, Mario Berta, Matthias Christandl, Omar Fawzi, Marcus Huber, Renato Renner and Andreas Winter for helpful discussion. We thank Milán Mosonyi for helpful comments on the final versions of the manuscript. PE is grateful for support through Exzellenz-Auslandsstipendien 2013, LIQUID and 369.science. PE's work is supported by the Spanish MINECO  Project No. FIS2013-40627-P, the Generalitat de Catalunya CIRIT Project No. 2014 SGR 966, Swiss National Science Foundation (SNF), the National Centres of Competence in Research “Quantum Science and Technology” (QSIT), the COST action on Fundamental Problems in Quantum Physics and the European Commission Project RAQUEL.

\appendix

\section{}

In the following we prove that the $\Delta_{\min}$ is non-increasing under data processing and fulfills a duality relation for quadripartite pure states \cite{Erk14}.

\begin{theo}
Let $\rho_{ABCD}$ be a pure state and let $\mathcal{E}$ be a trace non-increasing CPM from $B \rightarrow B'$ then 
\begin{equation}
 \Delta_{\min} (A:B | C)_{\rho}= \Delta_{\min} (A:B | D)_{\rho}
\end{equation}
and
\begin{equation}
 \Delta_{\min} (A:B | C)_{\rho} \geq  \Delta_{\min} (A:B' | C)_{\rho}.
\end{equation}
\end{theo}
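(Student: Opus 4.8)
The plan is to reduce the quantity to a fidelity optimization and then treat the two claims by different mechanisms: the data-processing inequality (second displayed equation) will follow from monotonicity of the generalized fidelity together with closure of the Markov set under operations on $B$, whereas the duality (first displayed equation) will follow from Uhlmann's theorem combined with the duality of the \emph{von Neumann} conditional mutual information. Since $-\log$ is decreasing, the starting point is to rewrite
\begin{equation}
\Delta_{\min}(A:B|C)_{\rho} = -\log \sup_{\sigma_{ABC}\in\mathcal{M}_{AB|C}} F^2(\rho_{ABC},\sigma_{ABC}),
\end{equation}
and analogously with $C$ replaced by $D$, where $\rho_{ABC}=\mathrm{Tr}_D\,\rho_{ABCD}$ and $\rho_{ABD}=\mathrm{Tr}_C\,\rho_{ABCD}$.

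For the data-processing inequality I would use two ingredients. First, the generalized fidelity is non-decreasing under trace non-increasing CP maps (this monotonicity is precisely why the correction term is built into its definition), so writing $\rho'=(\mathrm{id}_{AC}\otimes\mathcal{E})(\rho_{ABC})$ one has $F(\rho',(\mathrm{id}_{AC}\otimes\mathcal{E})\sigma_{ABC})\geq F(\rho_{ABC},\sigma_{ABC})$. Second, $\mathcal{M}_{AB|C}$ is mapped \emph{into} $\mathcal{M}_{AB'|C}$ by $\mathrm{id}_{AC}\otimes\mathcal{E}$: in the block decomposition (\ref{markov}) the subsystem $B$ appears only inside the right factors, so the image $\bigoplus_i p_i\,\sigma_{AC^{L}_i}\otimes(\mathrm{id}_{C^{R}_i}\otimes\mathcal{E})(\sigma_{C^{R}_i B})$ keeps the block-product form with the same decomposition of $\mathcal{H}_C$ and hence is a Markov state for $AB'|C$. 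Combining the two, the supremum over $\mathcal{M}_{AB'|C}$ dominates the supremum over the image of $\mathcal{M}_{AB|C}$, which by the fidelity bound dominates the supremum over $\mathcal{M}_{AB|C}$; applying $-\log$ reverses the inequality and yields $\Delta_{\min}(A:B|C)_{\rho}\geq\Delta_{\min}(A:B'|C)_{\rho}$.

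For the duality the key idea is to lift each Markov optimization to an optimization over purifications of the \emph{fixed} pure state $\rho_{ABCD}$. By Uhlmann's theorem, with $\ket{\rho}=\ket{\rho_{ABCD}}$ held fixed as a purification of $\rho_{ABC}$, one has $F(\rho_{ABC},\sigma_{ABC})=\max_{\ket{\psi}}|\langle\rho|\psi\rangle|$ as $\ket{\psi}_{ABCD}$ ranges over purifications of $\sigma_{ABC}$, giving
\begin{equation}
\sup_{\sigma\in\mathcal{M}_{AB|C}} F(\rho_{ABC},\sigma)=\sup\bigl\{\,|\langle\rho|\psi\rangle| : \mathrm{Tr}_D\ket{\psi}\!\bra{\psi}\in\mathcal{M}_{AB|C}\bigr\},
\end{equation}
and symmetrically for $C\leftrightarrow D$. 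Now for any pure $\ket{\psi}_{ABCD}$ the von Neumann conditional mutual information obeys $I(A:B|C)_{\psi}=I(A:B|D)_{\psi}$, since $H(AC)=H(BD)$, $H(BC)=H(AD)$, $H(C)=H(ABD)$ and $H(ABC)=H(D)$ for a four-party pure state. Consequently $\mathrm{Tr}_D\ket{\psi}\!\bra{\psi}\in\mathcal{M}_{AB|C}$ if and only if $\mathrm{Tr}_C\ket{\psi}\!\bra{\psi}\in\mathcal{M}_{AB|D}$, so the two feasible sets of purifications coincide exactly and the two suprema are literally the same optimization; taking $-\log$ gives $\Delta_{\min}(A:B|C)_{\rho}=\Delta_{\min}(A:B|D)_{\rho}$.

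The main obstacle I anticipate is the applicability of Uhlmann's theorem with the reference being the fixed state $\rho_{ABCD}$ rather than an auxiliary purifying space of unrestricted dimension: one needs $\dim\mathcal{H}_D$ (respectively $\dim\mathcal{H}_C$) to be at least the rank of the optimal Markov state $\sigma_{ABC}$ (respectively $\tau_{ABD}$). I would resolve this either by bounding the rank of the optimizers or by first embedding into suitably enlarged systems and then taking reductions, noting that the equality is insensitive to such enlargement. The remaining verifications — block-structure preservation under $\mathrm{id}_{AC}\otimes\mathcal{E}$ and the identity $I(A:B|C)_{\psi}=I(A:B|D)_{\psi}$ — are routine.
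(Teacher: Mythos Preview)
Your approach matches the paper's: Uhlmann's theorem together with the pure-state identity $I(A:B|C)_{\psi}=I(A:B|D)_{\psi}$ for the duality, and fidelity monotonicity plus closure of $\mathcal{M}_{AB|C}$ under maps on $B$ for data processing (the paper phrases the latter as ``$\mathcal{E}$ does not affect the constraints of the optimization'', which is exactly your block-structure observation). The dimension obstacle you flag is handled in the paper precisely along the lines you propose, by appending a pure ancilla $E$ so that $\rho_{ABCDE}=\rho_{ABCD}\otimes\phi_E$ supplies enough room for the Uhlmann purification.
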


\begin{proof}
Starting out from the definition, we exploit Uhlmann's theorem \cite{Uhl} to conclude that
\begin{align}
\Delta_{\min}  (A:B | C) &= \inf_{\substack{ \sigma_{ABC} \in \mathcal{M} }} - \log \| \sqrt{\rho_{ABC}} \sqrt{\sigma_{ABC}} \|^{2}_{1} \\
&=\inf_{\substack{ \sigma_{ABCDE} \\ I(A:B|C)_{\sigma}=0}} - \log \| \sqrt{\rho_{ABCDE}} \sqrt{\sigma_{ABCDE}} \|^{2}_{1} 
\end{align}

 where $\sigma_{ABCDE}$ is a purification of $\sigma_{ABC}$. Note that E is an ancillary system in case D is not of sufficiently high dimension to purify $\sigma_{ABC}$ and that due to the pureness of $\rho_{ABCD}$ 
\begin{equation}
\rho_{ABCDE}= \rho_{ABCD} \otimes \phi_E ,
\end{equation}
where $\phi_E$ is some arbitrary but fixed pure state. Starting from the right hand side we get that
\begin{align}
\Delta_{\min} (A:B | D) &= \inf_{\substack{ \sigma_{ABD} \in \mathcal{M} } } - \log \| \sqrt{\rho_{ABD}} \sqrt{\sigma_{ABD}} \|^{2}_{1} \\
&=\inf_{\substack{ \sigma_{ABCDE} \\ I(A:B|D)_{\sigma}=0}} - \log \| \sqrt{\rho_{ABCDE}} \sqrt{\sigma_{ABCDE}} \|^{2}_{1}.
\end{align}
 As both optimizations run over $\sigma$ as a purification of a Markov state while $\rho_{ABCDE}$ is fixed, they consequently lead to the same optimal state $\sigma_{ABCDE}$, which concludes the proof for the first part of the theorem.\\
For the second part of the theorem note that due to the properties of the trace norm 
\begin{equation}
\| \sqrt{\rho_{ABC}} \sqrt{\sigma_{ABC}} \|_{1} \leq  \| \sqrt{\mathcal{E} ( \rho_{ABC})} \sqrt{\mathcal{E} (\sigma_{ABC}) }\|_{1}.
\end{equation}
The sought for result follows as $\mathcal{E}$ does not affect the constraints of the optimization.

\end{proof}


\begin{thebibliography}{1}

\bibitem{DY07} I. Devetak, J. Yard. {\it The operational meaning of quantum conditional information.} Phys. Rev. Lett. 100, 230501, 2008.

\bibitem{CW03} M. Christandl, A. Winter.{\it  “Squashed Entanglement” - An Additive Entanglement Measure.} J. Math. Phys. Vol 45, No 3, pp. 829-840, 2004.

\bibitem{Dat09}N. Datta. {\it Min- and Max-Relative Entropies and a New Entanglement Monotone}. IEEE Trans. Inf. Th., vol. 55, no. 6, 2009.


\bibitem{Ren05} R. Renner. {\it Security of Quantum Key Distribution}. PhD thesis, ETH Zurich, 2005. Available: http://arxiv.org/abs/quant-ph/0512258v2.


\bibitem{Tom12} M. Tomamichel. {\it A Framework for Non-Asymptotic Quantum information Theory}. PhD thesis, ETH Zurich, 2012. Available: http://arxiv.org/abs/1203.2142.

\bibitem{LMW12} N. Linden, M. Mosonyi, A. Winter. {\it The structure of Renyi entropic inequalities}. Proc. R. Soc. A 469(2158):20120737, 2013.


\bibitem{WWY14} M. Wilde, A. Winter, D. Yang. {\it Strong converse for the classical capacity of entanglement-breaking and Hadamard channels via a sandwiched Renyi relative entropy}. Comm. in Math. Phys., vol. 331, no. 2, pages 593-622, 2014.

\bibitem{Mul13} M. Müller-Lennert, F. Dupuis, O. Szehr, S. Fehr, M. Tomamichel. {\it On quantum Renyi entropies: a new generalization and some properties}. J. Math. Phys. 54, 122203, 2013.


\bibitem{Tom10} M. Tomamichel, R. Colbeck, R. Renner. {\it Duality Between Smooth Min- and Max-Entropies}. IEEE Trans. Inf. Th., vol. 56, no. 9, 2010. 

\bibitem{umegaki} H. Umegaki. {\it Conditional expectation in an operator algebra. IV. Entropy and information}. Kodai Math. Sem. Rep., vol. 14, no. 2, pp. 59-85, 1962. 



\bibitem{KRS09} R. Koenig, R. Renner, C. Schaffner. {\it The operational meaning of min- and max-entropy}. IEEE Trans. Inf. Th., vol. 55, no. 9, 2009.
 
\bibitem{MH11} M. Mosonyi, F. Hiai. {\it On the quantum Renyi relative entropies and related capacity formulas}.  	IEEE Transactions on Information Theory, vol. 57, 2474--2487, 2011.

\bibitem{HJPW04} P. Hayden, R. Jozsa, D. Petz, A. Winter. {\it Structure of states which
satisfy strong subadditivity of quantum entropy with equality}. Communications in Mathematical Physics, 246(2):359–374, April 2004.



\bibitem{ILW06} B. Ibinson, N. Linden, A. Winter. {\it Robustness of quantum Markov chains}. Communications in Mathematical Physics 2/2008; 289. 

\bibitem{Erk14} P. Erker. {\it Duality for the conditional min-information}.  Semester thesis, ETH Zurich,
November 2013.


\bibitem{CSW}  M. Christandl, N. Schuch, A. Winter. {\it Entanglement of the Antisymmetric States.} Communications in Mathematical Physics 10/2009; 311(2).

\bibitem{milan}  F. Hiai, M. Mosonyi, M. Hayashi. {\it Quantum hypothesis testing with group symmetry.} Journal of Mathematical Physics 50, 103304, 2009.



\bibitem{BKM14} M. Berta, K. Seshadreesan. M. Wilde. {\it R\'enyi generalizations of the conditional quantum mutual information.} arXiv:1403.6102


\bibitem{Aud13} K. Audenaert, N. Datta. {\it $\alpha$-z-relative Renyi entropies.} arXiv:1310.7178

\bibitem{Lin14} L. Zhang. {\it On some entropy inequalities}. arXiv:1403.6590 


%





%









\bibitem{Uhl} A. Uhlmann.{\it The "Transition Probability" in the State Space of a *-Algebra}. Rep. Math. Phys. 9 (1976) 273 - 279. 





















\end{thebibliography}
\end{document}